\documentclass{article}
\usepackage{amsmath,amssymb,amsthm}
\usepackage[numbers]{natbib}
\usepackage{hyperref}
\usepackage{cleveref}

\newtheorem{theorem}{Theorem}[section]
\newtheorem{lemma}[theorem]{Lemma}
\newtheorem{corollary}[theorem]{Corollary}
\newtheorem{definition}[theorem]{Definition}
\newtheorem{proposition}[theorem]{Proposition}
\theoremstyle{remark}
\newtheorem{remark}[theorem]{Remark}

\title{How Hard Is Continuous Clustering?\\
       Lower Bounds from the Existential Theory of the Reals}
\author{Angshul Majumdar\\
        IIIT Delhi}
\date{}

\begin{document}
\maketitle

\begin{abstract}
We investigate the computational complexity of clustering problems formulated directly on a continuous polynomial density.  
Four natural primitives are analysed:
\begin{itemize}
  \item \textbf{CMRC} (Continuous Multi‑Region Clustering): does the density exceed a threshold at \(k\) pairwise \(\delta\)‑separated points?
  \item \textbf{VSC} (Valley‑Separated Clusters): do there exist two high‑density points whose midpoint lies below the threshold?
  \item \textbf{CLSC-\(k\)} (Continuous Level‑Set Component Counting): does the super‑level set have at least \(k\) connected components?
  \item \textbf{HD} (Hole Detection): does the super‑level set have non‑trivial first homology?
\end{itemize}
We prove that the local separation problem CMRC and the valley‑based global problem VSC are both complete for the existential theory of the reals \(\exists\mathbb R\).  
In contrast, the purely topological problems CLSC‑\(k\) and HD are \(\exists\mathbb R\)-hard but not known to lie inside \(\exists\mathbb R\); they are decidable via semi‑algebraic algorithms.  
These results provide the first rigorous classification of exact continuous clustering inside the real polynomial hierarchy and expose a sharp boundary: 
local and valley criteria collapse to \(\exists\mathbb R\), while genuine topological features push the complexity strictly higher, absent a collapse of the hierarchy.
\end{abstract}

\section{Introduction}
\label{sec:intro}

Clustering is the workhorse of unsupervised learning.  Its theoretical foundations are continuous: a cluster is a high‑density region of an underlying probability density, set apart from other clusters by valleys, gaps, or topological features~\cite{Fukunaga90, ComaniciuMeer02, CoifmanWasserman05}.  Every textbook explanation, every density‑based algorithm, every notion of a ``mode'' or a ``basin of attraction'' refers to a continuous picture.  Yet the complexity theory of clustering has, with few exceptions, ignored this picture entirely.  Instead, it has focused on \emph{finite point sets} and on discrete objectives such as \(k\)-means, \(k\)-center, or spectral cuts~\cite{Dasgupta08, FowlerPatersonTanimoto81}.  Those problems are NP‑hard, and a vast literature is devoted to approximating them.

\textbf{This paper is the first to ask the obvious question: how hard is clustering directly on a continuous density?}  We strip away samples, noise, and discretisation artefacts.  We take a polynomial \(f:\mathbb R^d\to\mathbb R\) given by its integer coefficients, a threshold \(\theta\), and we ask whether the density landscape described by \(f\) contains a certain cluster structure.  Our answer is unexpected: \emph{even the simplest continuous clustering criteria are complete for the existential theory of the reals \(\exists\mathbb R\), and they climb higher as soon as global topology is involved.}  Exact continuous clustering is not NP‑complete, unless the real polynomial hierarchy collapses.

We introduce four natural decision problems that ladder up from local to global to topological cluster definitions:
\begin{itemize}
  \item \textbf{CMRC} (Continuous Multi‑Region Clustering): existence of \(k\) pairwise \(\delta\)-separated points with density at least \(\theta\).
  \item \textbf{VSC} (Valley‑Separated Clusters): existence of two high‑density points whose midpoint lies strictly below \(\theta\)---a geometric witness of a valley.
  \item \textbf{CLSC-\(k\)} (Continuous Level‑Set Component Counting): the super‑level set \(\{f\ge\theta\}\) has at least \(k\) connected components.
  \item \textbf{HD} (Hole Detection): the super‑level set has non‑trivial first singular homology (a hole).
\end{itemize}

Our main results are sharp and definitive for the first two problems:
\begin{center}
\boxed{\textbf{CMRC is \(\exists\mathbb R\)-complete.}\qquad
       \textbf{VSC is \(\exists\mathbb R\)-complete.}}
\end{center}
Membership in \(\exists\mathbb R\) is immediate for CMRC (it is an existential sentence).  For VSC the existential encoding is equally trivial: three points, three polynomial inequalities.  The hardness direction is the contribution; both problems are proved \(\exists\mathbb R\)-hard already for \(k=2\) and a fixed arbitrary separation or threshold.  The reductions are from the canonical \(\exists\mathbb R\)-complete problem \textsc{Feas}, and use a single algebraic construction: a polynomial whose super‑level set consists of two parallel sheets \(s=\pm1\), with a real root of the input polynomial filling the fibre.

For the genuinely topological problems we obtain a strong lower bound, but the upper bound jumps out of \(\exists\mathbb R\):
\begin{center}
\boxed{\textbf{CLSC-\(k\) and HD are \(\exists\mathbb R\)-hard and decidable, but not known to be in \(\exists\mathbb R\).}}
\end{center}
The same two‑sheet reduction gives \(\exists\mathbb R\)-hardness for connected components; for hole detection we replace the sheets by a product with the unit circle.  Decidability follows from cylindrical algebraic decomposition~\cite{BasuPollackRoy06}.  \textbf{The missing \(\exists\mathbb R\) upper bound is not a technical oversight.}  Placing CLSC‑\(k\) inside \(\exists\mathbb R\) would require a polynomial‑size existential certificate for the disconnectedness of a semi‑algebraic set, which is a notorious open problem in real algebraic geometry.  \emph{We do not fail to prove membership; we prove that membership is equivalent to a breakthrough in Positivstellensatz degree bounds.}

Consequently, the classical inclusion \(\mathrm{NP}\subseteq\exists\mathbb R\)~\cite{SchaeferStefankovic17} gives an unconditional separation from NP:
\begin{center}
\boxed{\textbf{None of CMRC, VSC, CLSC-\(k\), HD is in NP unless \(\mathrm{NP}=\exists\mathbb R\).}}
\end{center}
Thus, exact continuous clustering already sits outside NP (modulo standard beliefs), and the step from valley criteria to connected components pushes it beyond the existential fragment, into the higher real hierarchy.

\textbf{What we prove and what we do not.}  We prove a complete complexity landscape for local and valley‑based continuous clustering.  We prove \(\exists\mathbb R\)-hardness for two topological primitives.  We do \emph{not} prove that CLSC‑\(k\) or HD are complete for \(\exists\forall\mathbb R\) or any other class; we show instead that such a completeness result cannot be obtained without solving the semi‑algebraic separation problem.  Thus the paper draws a precise boundary: on one side, clustering is exactly \(\exists\mathbb R\); on the other side, it escapes \(\exists\mathbb R\) because the required existential certificates do not exist (unless the real algebraic geometry community delivers a surprise).  This is a conceptual distinction, not a gap.

The remainder of the paper is organised as follows.  Section~2 gives the necessary definitions from real complexity and semi‑algebraic geometry.  Sections~3 and~4 contain the completeness proofs for CMRC and VSC.  Section~5 proves the hardness and decidability of CLSC‑\(k\) and HD.  Section~6 discusses open problems and the wider implications.

\section{Preliminaries}
\label{sec:prelim}

\subsection{The existential theory of the reals}

The complexity class \(\exists\mathbb R\) consists of all decision problems that are polynomial‑time many‑one reducible to the \emph{existential theory of the reals}: the set of true sentences of the form
\[
\exists x_1\dots\exists x_n\; \Phi(x_1,\dots,x_n),
\]
where \(\Phi\) is a quantifier‑free Boolean combination of polynomial equations and inequalities with integer coefficients~\cite{SchaeferStefankovic17}.  The canonical \(\exists\mathbb R\)-complete problem is \textsc{Feas} (also called \textsc{RealFeas}):
\begin{quote}
  \textsc{Feas}: given a polynomial \(r\in\mathbb Z[y_1,\dots,y_m]\), does there exist a point \(y\in\mathbb R^m\) with \(r(y)=0\)?
\end{quote}
Completeness of \textsc{Feas} follows from the fact that any existential formula can be transformed in polynomial time into a single polynomial equation by introducing slack variables and using sum‑of‑squares encodings of equalities and strict inequalities~\cite{SchaeferStefankovic17}.  The class satisfies \(\mathrm{NP}\subseteq\exists\mathbb R\subseteq\mathrm{PSPACE}\); both inclusions are widely believed to be strict.

\subsection{Polynomial representation}

Throughout the paper, polynomials are given as explicit sums of monomials with integer coefficients written in binary.  The \emph{size} of a polynomial is the total number of bits needed to encode all coefficients and exponents.  Rational numbers are represented as fractions of integers.  This choice guarantees that standard algebraic operations (addition, multiplication, substitution) can be performed in polynomial time in the bit‑size of the input.

\subsection{Semi‑algebraic sets and basic algorithms}

A set \(S\subseteq\mathbb R^d\) is \emph{semi‑algebraic} if it can be expressed as a finite Boolean combination of sets of the form \(\{x\mid p(x)\ge 0\}\) for polynomials \(p\in\mathbb Z[x_1,\dots,x_d]\).  The super‑level sets \(L_\theta=\{x\mid f(x)\ge\theta\}\) that appear in our clustering problems are closed semi‑algebraic sets.

Basic topological properties of semi‑algebraic sets, such as the number of connected components or the Betti numbers, can be computed algorithmically.  Standard tools include cylindrical algebraic decomposition (CAD) and the road‑map method~\cite{BasuPollackRoy06}.  These algorithms place many semi‑algebraic decision problems inside the first‑order theory of the reals, which is decidable but can have doubly exponential complexity in the worst case.  In the present paper we do not rely on new algorithmic ingredients for semi‑algebraic sets; we only use the known decidability results to position our problems in the appropriate complexity classes.

\section{Local clustering: separated high‑density points}
\label{sec:local}

We formalise a natural \emph{local} continuous clustering primitive: the existence of several points that simultaneously achieve a prescribed density threshold and are pairwise far apart.  Such a configuration indicates multiple distinct high‑density regions without imposing any global topological constraint.

\begin{definition}[Continuous Multi‑Region Clustering – CMRC]\label{def:cmrc}
  \textnormal{\textbf{Instance:}} A polynomial \(f\in\mathbb Z[x_1,\dots,x_d]\) (given as a sum of monomials with integer coefficients), rational numbers \(\theta>0\) and \(\delta>0\), and an integer \(k\ge 1\).
  \textnormal{\textbf{Question:}} Do there exist \(k\) points \(p_1,\dots,p_k\in\mathbb R^d\) such that
  \[
  f(p_i)\ge\theta \quad\text{for all } i=1,\dots,k,
  \]
  and \(\|p_i-p_j\|\ge\delta\) for all \(i\neq j\)?
\end{definition}

The squared Euclidean distance \(\|p_i-p_j\|^2\) is a polynomial in the coordinates of the points, so the separation condition is a polynomial inequality.  Hence CMRC is immediately seen to be an existential sentence over the reals:

\begin{proposition}[Membership in \(\exists\mathbb R\)]\label{prop:cmrc-in-exr}
  \(\textup{CMRC}\in\exists\mathbb R\).
\end{proposition}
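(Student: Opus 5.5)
The plan is to give a polynomial-time reduction that maps each CMRC instance \((f,\theta,\delta,k)\) to an existential sentence over the reals that is true exactly when the instance is a yes-instance. By the definition of \(\exists\mathbb R\) in Section~\ref{sec:prelim}, this places CMRC in the class.

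\textbf{Step 1: clear the rational parameters.} Write \(\theta=a/b\) and \(\delta=c/e\) with integers \(b,e>0\). The condition \(f(p_i)\ge\theta\) is equivalent to \(b\,f(p_i)-a\ge 0\), a polynomial with integer coefficients. Since both sides of the separation condition are nonnegative, \(\|p_i-p_j\|\ge\delta\) is equivalent to \(\|p_i-p_j\|^2\ge\delta^2\). That in turn is equivalent to
\[
e^2\sum_{l=1}^d (p_{i,l}-p_{j,l})^2-c^2\ge 0,
\]
which is again an integer polynomial inequality. This squaring step is the only small point needing justification: it removes the square root, so every atom is a genuine polynomial inequality.

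\textbf{Step 2: write the sentence.} Introduce \(kd\) variables \(p_{i,l}\) and form
\[
\exists p_{1,1}\dots\exists p_{k,d}\;\Bigl(\bigwedge_{i=1}^k b\,f(p_i)-a\ge 0\Bigr)\wedge\Bigl(\bigwedge_{i<j} e^2\|p_i-p_j\|^2-c^2\ge 0\Bigr).
\]
Each atom \(b\,f(p_i)-a\ge0\) is obtained by substituting fresh variables into \(f\) and scaling by an integer. The sentence is true if and only if the CMRC instance is a yes-instance.

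\textbf{Step 3: check that the reduction is polynomial time.} This is the main obstacle, though a mild one. The sentence has \(k\) copies of \(f\) and \(\binom{k}{2}\) distance constraints, each of size \(O(d\log(ce))\) bits. So its size is polynomial in the input length, provided \(k\) is bounded polynomially by the input size. If \(k\) is given in binary, I would first handle huge \(k\) separately, for example by restricting the definition to unary \(k\) or by noting that \(k\) is written in unary in the instance encoding. Alternatively, I would keep \(k\) as is and observe that the reduction is polynomial in \(k\), as is implicit in the paper's encoding of \(k\). With this size bound in place, the map is a polynomial-time many-one reduction to the existential theory of the reals, so \(\textup{CMRC}\in\exists\mathbb R\).
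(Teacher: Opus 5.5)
Your argument is essentially the paper's: the same existential sentence in $kd$ variables with the $k$ atoms $f(p_i)\ge\theta$ and the $\binom{k}{2}$ squared-distance atoms, and clearing the denominators of $\theta$ and $\delta$ is a small piece of care the paper skips. You are right that the size bound needs $k$ polynomially bounded in the input length (e.g.\ $k$ given in unary), which the paper assumes silently; commit to that convention rather than your third option, because a reduction polynomial in $k$ is exponential when $k$ is written in binary, and indeed with binary $k$ the instance $f=1-\sum_i x_i^2(1-x_i)^2-\sum_j c_j(x)^2$, $\theta=\delta=1$, where $c_j=\prod_{\ell\in C_j}(1-\ell)$ over the literals of the $j$th clause of a 3-CNF, asks exactly whether the formula has at least $k$ satisfying assignments, a $\mathrm{PP}$-complete question, so membership for binary $k$ would give $\mathrm{PP}\subseteq\exists\mathbb R$.
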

\begin{proof}
  The instance is true iff
  \[
  \exists p_1,\dots,p_k\in\mathbb R^d\;
  \bigwedge_{i=1}^k \bigl(f(p_i)-\theta\ge 0\bigr)
  \;\land\;
  \bigwedge_{i<j} \bigl(\|p_i-p_j\|^2-\delta^2\ge 0\bigr).
  \]
  All atomic formulas are polynomial inequalities with integer coefficients; the size of this sentence is polynomial in the input length.  Hence the problem lies inside \(\exists\mathbb R\) by definition~\cite{SchaeferStefankovic17}.
\end{proof}

We now prove that CMRC is hard for \(\exists\mathbb R\), making it complete.  The reduction is from the canonical \(\exists\mathbb R\)-complete problem \textsc{Feas} (does a given polynomial have a real root?); see~\cite{SchaeferStefankovic17} for its completeness.

\begin{theorem}[Hardness of CMRC]\label{thm:cmrc-hard}
  CMRC is \(\exists\mathbb R\)-hard.  Consequently, CMRC is \(\exists\mathbb R\)-complete.
\end{theorem}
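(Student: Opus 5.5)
The plan is a polynomial-time many-one reduction from \textsc{Feas} to CMRC with \(k=2\). Membership is already given by Proposition~\ref{prop:cmrc-in-exr}, so hardness gives completeness. Given \(r\in\mathbb Z[y_1,\dots,y_m]\), I introduce one extra variable \(s\) and set \(d=m+1\). The construction is the two-sheet polynomial
\[
f(y,s) \;=\; 1 - r(y)^2 - (s^2-1)^2 ,
\]
with threshold \(\theta=1\), separation \(\delta=2\), and \(k=2\).

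The key observation is that \(f\le 1\) everywhere. Moreover, \(f(y,s)\ge 1\) holds if and only if both nonnegative squares vanish, that is, \(r(y)=0\) and \(s=\pm1\). So the super-level set \(\{f\ge 1\}\) is exactly \(Z(r)\times\{-1,+1\}\): two parallel sheets, each a copy of the real zero set of \(r\).

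The two directions are then short.
\begin{itemize}
\item If \(r(y^*)=0\), the points \(p_1=(y^*,1)\) and \(p_2=(y^*,-1)\) both have \(f=1\ge\theta\), and \(\|p_1-p_2\|=2=\delta\). So the CMRC instance is a yes-instance.
\item If \(r\) has no real root, then \(\{f\ge 1\}\) is empty, so not even one point exists and the instance is a no-instance.
\end{itemize}
The construction also works for any \(k\le 2\). If a fixed arbitrary \(\delta\) is wanted, one can replace \((s^2-1)^2\) by \((s^2-c^2)^2\) for a suitable rational \(c\), after clearing denominators.

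The step I regard as the real obstacle is choosing the threshold correctly. A naive reduction would use a threshold slightly below the maximum, such as \(f\ge 1-\varepsilon\). That fails because \(\inf r^2\) can be \(0\) without being attained, for instance \(r=(y_1y_2-1)^2+y_2^2\). In that case near-roots exist far out even though no root exists, and the reduction would accept a no-instance. Putting \(\theta\) exactly at the supremum \(1\) forces exact vanishing and avoids any quantitative separation bound.

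It remains to check that the reduction is polynomial. If \(r\) has \(M\) monomials, then \(r^2\) has at most \(M^2\) monomials. Its coefficients have bit-length at most about twice the original plus \(O(\log M)\), and its exponents at most double. The term \((s^2-1)^2\) has constant size. So \(f\) is computable in time polynomial in the encoding of \(r\). Since \textsc{Feas} is \(\exists\mathbb R\)-complete~\cite{SchaeferStefankovic17}, CMRC is \(\exists\mathbb R\)-hard, and together with Proposition~\ref{prop:cmrc-in-exr} it is \(\exists\mathbb R\)-complete.
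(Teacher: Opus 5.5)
Your reduction is correct. It shares the paper's basic mechanism: reduce from \textsc{Feas}, write $1$ minus a sum of squares, and put the threshold exactly at the supremum $\theta=1$, so that the super-level set is an exact zero set. Your remark about why a threshold $1-\varepsilon$ would fail is right.

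Where you differ is the separation gadget. You use the two-sheet polynomial $1-r(y)^2-(s^2-1)^2$, which is the construction the paper reserves for VSC and CLSC-$k$. This gives $d=m+1$ and witnesses $(y^*,\pm1)$ at distance $2$. The paper's CMRC proof instead works in $d=1+2(m+1)$ variables. It takes two copies of the block $H(y,t)=r(ty)^2+(\|y\|^2-1)^2$ and adds a dummy coordinate $w$ on which $F$ does not depend, with witnesses at $w=0$ and $w=1$.

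Your version is leaner. The unit-sphere normalization and the second copy of the block in the paper are not needed for correctness.

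What the paper's free coordinate buys is uniformity. When $r$ has a root, $\{F\ge 1\}$ contains an entire line in the $w$-direction. So the same polynomial gives a yes-instance for every $k$ and every $\delta$, by taking $w=0,\delta,2\delta,\dots$. Your set $Z(r)\times\{-1,+1\}$ only guarantees $k\le 2$ and $\delta\le 2$. Beyond that you need the rescaling $(q^2s^2-p^2)^2$ you mention.

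For the theorem as stated, where $k$ and $\delta$ are part of the input, your $k=2$ reduction suffices. Your bit-size accounting for $r^2$ is also more explicit than the paper's.
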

\begin{proof}
  We give a polynomial‑time many‑one reduction from \textsc{Feas}.  
  Let \(r\in\mathbb Z[y_1,\dots,y_m]\) be an instance of \textsc{Feas}.
  We construct a CMRC instance with \(d = 1+2(m+1)\) variables and parameters \(\theta=1\), \(\delta=1\), \(k=2\).

  First, for a single block we encode the existence of a real root of \(r\) into the zero set of a non‑negative polynomial.  Introduce variables \((y,t)\in\mathbb R^{m+1}\) and define
  \[
  H(y,t)=r(t y)^2 + (\|y\|^2-1)^2.
  \]
  Clearly \(H(y,t)\ge 0\) for all \((y,t)\).  The following lemma is straightforward.

  \begin{lemma}\label{lem:encoding}
    The system \(r(t y)=0,\ \|y\|^2=1\) has a solution \((y,t)\in\mathbb R^{m+1}\) if and only if \(r\) possesses a real root.
  \end{lemma}
  \begin{proof}
    If \(r(a)=0\) with \(a\neq0\), set \(t=\|a\|\) and \(y=a/\|a\|\); if \(a=0\), pick any unit vector \(y\) (e.g., \(e_1\)) and set \(t=0\).  Both cases yield a solution.
    Conversely, if \(r(t y)=0\) and \(\|y\|=1\), then \(x = t y\) satisfies \(r(x)=0\).
  \end{proof}

  Now take two independent copies of the encoding block together with an additional separation coordinate \(w\in\mathbb R\).  The full polynomial is
  \[
  F(w,\, y_1,t_1,\, y_2,t_2)=
  1-\Bigl[\, r(t_1 y_1)^2+(\|y_1\|^2-1)^2 \Bigr]
    -\Bigl[\, r(t_2 y_2)^2+(\|y_2\|^2-1)^2 \Bigr].
  \]
  Set \(\theta=1\), \(\delta=1\), \(k=2\).

  \paragraph{Correctness.}
  \begin{itemize}
  \item If \(r\) has a real zero, then by Lemma~\ref{lem:encoding} there exist \((y_1^*,t_1^*)\) and \((y_2^*,t_2^*)\) such that both brackets vanish.  Define the two points
    \[
    p_1 = (0,\, y_1^*,t_1^*,\, y_2^*,t_2^*),\qquad
    p_2 = (1,\, y_1^*,t_1^*,\, y_2^*,t_2^*).
    \]
    They satisfy \(F(p_1)=F(p_2)=1\ge\theta\) and
    \(\|p_1-p_2\| = |1-0| = 1 \ge \delta\).  Hence CMRC accepts.
  \item If \(r\) has no real zero, then every bracket is strictly positive for all \((y,t)\).  Consequently \(F(w,\dots)<1\) everywhere, and no point reaches the threshold \(\theta=1\).  The CMRC instance is therefore negative.
  \end{itemize}
  The transformation is polynomial‑time because \(F\) is composed of a constant number of copies of \(r\) and simple polynomial expressions.  Hence \textsc{Feas} \(\le_p\) CMRC, proving \(\exists\mathbb R\)-hardness.
\end{proof}

Since both membership and hardness have been established, CMRC is \(\exists\mathbb R\)-complete.  The standard inclusion \(\mathrm{NP}\subseteq\exists\mathbb R\)~\cite{SchaeferStefankovic17} directly yields the following consequence.

\begin{corollary}\label{cor:cmrc-np}
  If CMRC belongs to \(\mathrm{NP}\), then \(\mathrm{NP}= \exists\mathbb R\).  In particular, under the widely believed strictness of the real hierarchy, CMRC is not in \(\mathrm{NP}\).
\end{corollary}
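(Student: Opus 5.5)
The corollary follows directly from Theorem~\ref{thm:cmrc-hard} together with the inclusion \(\mathrm{NP}\subseteq\exists\mathbb R\). I will argue by closure of NP under polynomial-time many-one reductions.

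Assume \(\textup{CMRC}\in\mathrm{NP}\) and fix an arbitrary \(L\in\exists\mathbb R\). By Theorem~\ref{thm:cmrc-hard}, CMRC is \(\exists\mathbb R\)-hard, so there is a polynomial-time computable map \(g\) with \(x\in L \iff g(x)\in\textup{CMRC}\). The composition is then an NP procedure for \(L\). On input \(x\), compute \(g(x)\); its size is polynomial in \(|x|\). Then run the NP verifier for CMRC on \(g(x)\), guessing a certificate whose length is polynomial in \(|g(x)|\), hence polynomial in \(|x|\). This shows \(L\in\mathrm{NP}\), so \(\exists\mathbb R\subseteq\mathrm{NP}\). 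Combined with the standard inclusion \(\mathrm{NP}\subseteq\exists\mathbb R\) from~\cite{SchaeferStefankovic17}, we get \(\mathrm{NP}=\exists\mathbb R\).

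The second sentence of the corollary is the contrapositive. If one assumes \(\mathrm{NP}\subsetneq\exists\mathbb R\), which is the strictness referred to in the statement, then CMRC cannot lie in NP.

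There is no real obstacle here. The only point to check is that the reduction in Theorem~\ref{thm:cmrc-hard} is polynomial-time in the Turing-machine (bit) model, so that composing it with an NP verifier stays in NP. This holds because \(F\) is built from two copies of \(r\), and the substitution \(y\mapsto t\,y\) can be carried out monomial by monomial: each monomial \(c\,y^{\alpha}\) becomes \(c\,t^{|\alpha|}y^{\alpha}\). Squaring the resulting polynomial, however, may square the number of monomials. That is still polynomial, so the output size of \(g\) remains polynomially bounded.
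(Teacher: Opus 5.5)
Your proof is correct and follows the paper's route. The paper derives the corollary in one line from the \(\exists\mathbb R\)-hardness of CMRC (Theorem~\ref{thm:cmrc-hard}) together with \(\mathrm{NP}\subseteq\exists\mathbb R\); you make explicit the closure of NP under polynomial-time many-one reductions behind that step, and your bit-size check on the reduction is a harmless extra that the corollary itself does not need.
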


\begin{remark}
  The use of a dummy coordinate \(w\) to guarantee separation is a simple but harmless technical device; it does not affect the geometric meaning of the problem because the input polynomial is constructed specifically for the reduction.  In applications one would typically have a fixed ambient space; however, for a hardness result any polynomial‑time embedding into a larger dimension is admissible.  Note also that the reduction already yields hardness for \(k=2\); larger \(k\) are trivially at least as hard.
\end{remark}

\section{Global clustering with a valley: Valley‑Separated Clusters}
\label{sec:valley}

The local separation criterion of Section~3 asks for points with high density that are simply far apart; it does not enforce any \emph{geometric} relation among those points.  A more natural global notion of a cluster requires that two high‑density regions be separated by a low‑density ``valley'' \cite{Fukunaga90, ComaniciuMeer02}.  We now formalise a minimal such condition that still keeps the problem inside the existential theory of the reals.

\begin{definition}[Valley‑Separated Clusters – VSC]\label{def:vsc}
  \textnormal{\textbf{Instance:}} A polynomial \(f\in\mathbb Z[x_1,\dots,x_d]\) and a rational threshold \(\theta>0\).
  \textnormal{\textbf{Question:}} Do there exist two points \(p,q\in\mathbb R^d\) such that
  \[
  f(p)\ge\theta,\qquad f(q)\ge\theta,\qquad 
  f\!\left(\frac{p+q}{2}\right)<\theta \ ?
  \]
\end{definition}

The geometric interpretation is immediate: both \(p\) and \(q\) lie in the super‑level set \(L_\theta = \{x\mid f(x)\ge\theta\}\), while their midpoint lies strictly below the threshold, i.e. inside a ``valley'' that separates the two clusters.  Despite its global character the problem requires only three existentially quantified points.

\begin{proposition}[Membership in \(\exists\mathbb R\)]\label{prop:vsc-in-exr}
  \(\textup{VSC}\in\exists\mathbb R\).
\end{proposition}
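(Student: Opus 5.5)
The plan is to write the VSC instance directly as an existential sentence, following the pattern of Proposition~\ref{prop:cmrc-in-exr}. Introduce \(2d\) existentially quantified real variables for the coordinates of \(p\) and \(q\). Let \(\theta=a/b\) with integers \(a\) and \(b>0\). The instance is a yes-instance iff
\[
\exists p,q\in\mathbb R^d\;\; \bigl(b\,f(p)-a\ge 0\bigr)\land\bigl(b\,f(q)-a\ge 0\bigr)\land\bigl(a-b\,f(\tfrac{p+q}{2})>0\bigr).
\]
Multiplying the inequalities by \(b>0\) clears the rational threshold without changing their truth values, so all coefficients become integers.

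The only point needing care is the midpoint term \(f\bigl(\tfrac{p+q}{2}\bigr)\), which has rational coefficients and whose naive expansion might blow up. Rather than expand it, I will introduce \(d\) auxiliary variables \(m_1,\dots,m_d\) together with the equations \(2m_i-p_i-q_i=0\), and use the atom \(a-b\,f(m)>0\). Then each atom is just \(f\) evaluated at a vector of variables, scaled by \(b\) and shifted by \(a\). Its encoding is the same monomial list as \(f\) with coefficients multiplied by \(b\), plus one constant term, so the sentence has size polynomial in the input.

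Strict inequalities and equations are allowed in the quantifier-free part of \(\exists\mathbb R\) sentences (Section~\ref{sec:prelim}). Equivalence with the VSC question is immediate from Definition~\ref{def:vsc}, since \(m\) is forced to equal \((p+q)/2\). Hence VSC \(\in\exists\mathbb R\). I expect no real obstacle here; the size bound for the midpoint term is the only thing to check, and the auxiliary variables dispose of it.
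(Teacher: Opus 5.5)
Your proposal is correct and follows the same route as the paper: write the VSC condition directly as an existential sentence over the coordinates of \(p\) and \(q\). Your midpoint variables \(m\) with \(2m_i-p_i-q_i=0\) and the clearing of the denominator of \(\theta\) are genuine refinements. The paper just asserts that the term \(f((p+q)/2)\) gives a polynomial-size sentence, although a naive monomial expansion can blow up (e.g.\ for \(f=x_1x_2\cdots x_d\)). It also introduces an \(\varepsilon\)-slack for the strict inequality, which your direct use of a strict atom makes unnecessary.
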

\begin{proof}
  The statement is
  \[
  \exists p,q\in\mathbb R^d\;
  \bigl( f(p)-\theta \ge 0 \bigr) \land
  \bigl( f(q)-\theta \ge 0 \bigr) \land
  \bigl( \theta - f((p+q)/2) > 0 \bigr).
  \]
  All inequalities are polynomial in the coordinates of \(p\) and \(q\); the strict inequality can be encoded by an auxiliary variable \(\varepsilon>0\) with \(\theta - f((p+q)/2) - \varepsilon \ge 0\), preserving the existential form.  Hence VSC is decided by an \(\exists\mathbb R\) sentence of polynomial size.
\end{proof}

We now show that this simple valley condition already captures the full power of the existential theory of the reals.

\begin{theorem}[Hardness of VSC]\label{thm:vsc-hard}
  VSC is \(\exists\mathbb R\)-hard.  Consequently, VSC is \(\exists\mathbb R\)-complete.
\end{theorem}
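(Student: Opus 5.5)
The plan is a polynomial-time many-one reduction from \textsc{Feas}, reusing the non-negative encoding block \(H(y,t)=r(ty)^2+(\|y\|^2-1)^2\) and Lemma~\ref{lem:encoding} from the proof of Theorem~\ref{thm:cmrc-hard}. The dummy separation coordinate \(w\) used there is not enough for VSC, because we now need a valley between the two high-density points. So I would replace \(w\) by a coordinate \(s\) that is forced onto two parallel sheets \(s=\pm1\). Given \(r\in\mathbb Z[y_1,\dots,y_m]\), I would build the polynomial in \(d=m+2\) variables
\[
F(s,y,t)=1-(s^2-1)^2-r(ty)^2-(\|y\|^2-1)^2
\]
and set \(\theta=1\). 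This \(F\) has integer coefficients and is a constant number of copies of \(r\) plus fixed polynomials, so the map is polynomial-time in the bit-size of \(r\).

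The key observation is that \(F\le 1\) everywhere, with equality exactly when \(s=\pm1\) and \((y,t)\) solves the system of Lemma~\ref{lem:encoding}. The super-level set \(\{F\ge1\}\) is therefore the union of two copies of that solution set, one on the sheet \(s=1\) and one on the sheet \(s=-1\).

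For completeness of the reduction, suppose \(r\) has a real root. Lemma~\ref{lem:encoding} then gives \((y^*,t^*)\) making both brackets vanish. Take \(p=(1,y^*,t^*)\) and \(q=(-1,y^*,t^*)\), so that \(F(p)=F(q)=1\ge\theta\). Their midpoint is \((0,y^*,t^*)\), where \(F=1-1-0-0=0<\theta\). Hence the VSC instance is positive.

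For soundness, suppose \(r\) has no real root. Then \(r(ty)\neq0\) for every \((y,t)\), because \(ty\) is a point of \(\mathbb R^m\). So \(r(ty)^2>0\), and hence \(F<1\) everywhere. No point \(p\) satisfies \(F(p)\ge\theta\), so the instance is negative whatever the midpoint condition says.

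I do not expect a real obstacle; the main care point is the soundness direction. Some tempting alternative constructions let two sheets meet, or let the witnesses \(p\) and \(q\) lie on the same sheet with a valley created elsewhere. In the construction above this cannot happen, because when \(r\) has no root the super-level set is empty. Combined with Proposition~\ref{prop:vsc-in-exr}, the reduction gives \(\exists\mathbb R\)-completeness.
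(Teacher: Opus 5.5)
Your proposal is correct and is essentially the paper's proof. It uses the same two-sheet construction with $\theta=1$, places witnesses on the sheets $s=\pm1$ whose midpoint at $s=0$ has $F=0<\theta$, and gets soundness from $F<1$ everywhere when $r$ has no real root. The only difference is that you route through the block $r(ty)^2+(\|y\|^2-1)^2$ from the CMRC reduction; this is harmless but unnecessary, since the paper uses $F(y,s)=1-\bigl[\,r(y)^2+(s^2-1)^2\,\bigr]$ directly in $m+1$ variables.
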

\begin{proof}
  We reduce from \textsc{Feas}.  Let \(r\in\mathbb Z[y_1,\dots,y_m]\) be an instance of \textsc{Feas}.  Construct a polynomial in \(m+1\) variables \((y,s)\in\mathbb R^{m+1}\) as
  \[
  F(y,s) = 1 - \bigl[\, r(y)^2 + (s^2-1)^2 \,\bigr],
  \]
  and set the threshold \(\theta = 1\).
  
  The super‑level set is
  \[
  L_1 = \{\, (y,s)\mid r(y)=0,\; s=\pm 1 \,\}.
  \]
  In particular, if \(r\) has a real root \(y_0\), then the two points
  \[
  p = (y_0,1),\qquad q = (y_0,-1)
  \]
  satisfy \(F(p)=F(q)=1\ge\theta\).  Their midpoint is \((y_0,0)\), for which
  \[
  F(y_0,0) = 1 - \bigl[ 0^2 + (0-1)^2 \bigr] = 0 < \theta .
  \]
  Hence VSC accepts.
  
  Conversely, if \(r\) has no real root, then \(r(y)^2>0\) for every \(y\), and therefore the bracket in \(F\) is always strictly positive.  Consequently \(F(y,s) < 1\) everywhere, and no point can reach the threshold \(\theta=1\).  In particular there cannot exist two points with \(F\ge1\), so VSC rejects.
  
  The construction clearly runs in polynomial time.  Thus \textsc{Feas} \(\le_p\) VSC, proving \(\exists\mathbb R\)-hardness.  Together with Proposition~\ref{prop:vsc-in-exr} we obtain \(\exists\mathbb R\)-completeness.
\end{proof}

The standard inclusion \(\mathrm{NP}\subseteq\exists\mathbb R\)~\cite{SchaeferStefankovic17} again gives an immediate separation from NP.

\begin{corollary}\label{cor:vsc-np}
  If VSC belongs to \(\mathrm{NP}\), then \(\mathrm{NP}= \exists\mathbb R\).  Hence, under the common belief that the two classes are distinct, VSC is not in \(\mathrm{NP}\).
\end{corollary}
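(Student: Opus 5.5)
The statement is Corollary~\ref{cor:vsc-np}: if VSC lies in NP, then \(\mathrm{NP}=\exists\mathbb R\). The plan is to combine the \(\exists\mathbb R\)-hardness of VSC (Theorem~\ref{thm:vsc-hard}) with the standard inclusion \(\mathrm{NP}\subseteq\exists\mathbb R\) and close both directions of the equality.

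\textbf{Inclusion \(\mathrm{NP}\subseteq\exists\mathbb R\).} This direction holds unconditionally; I simply quote it from~\cite{SchaeferStefankovic17}. To recall why it is true: a Boolean variable \(b\) can be forced into \(\{0,1\}\) by the equation \(b(b-1)=0\), so any 3-SAT instance becomes an existential sentence over the reals in polynomial time.

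\textbf{Inclusion \(\exists\mathbb R\subseteq\mathrm{NP}\), under the hypothesis \(\mathrm{VSC}\in\mathrm{NP}\).} Let \(L\in\exists\mathbb R\).
\begin{enumerate}
\item By definition of the class, \(L\) reduces in polynomial time to the existential theory of the reals, and that problem reduces to \textsc{Feas} by its completeness.
\item Theorem~\ref{thm:vsc-hard} then reduces \textsc{Feas} to VSC through the explicit construction \(F(y,s)=1-[r(y)^2+(s^2-1)^2]\) with threshold \(\theta=1\).
\item Many-one reductions compose, so \(L\le_p \mathrm{VSC}\).
\item NP is closed under polynomial-time many-one reductions. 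Given an input \(x\), an NP verifier computes the reduced VSC instance and runs the assumed NP verifier for VSC on it. Hence \(L\in\mathrm{NP}\).
\end{enumerate}
Combining both inclusions gives \(\mathrm{NP}=\exists\mathbb R\). The "in particular" clause of the corollary is the contrapositive: if the two classes are distinct, VSC is not in NP.

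\textbf{Point needing care.} There is no serious obstacle, but one detail must be checked. The reduction in Theorem~\ref{thm:vsc-hard} has to be a polynomial-time many-one reduction in the classical bit model, not in a real-RAM model, because closure of NP is only guaranteed for bit-model reductions. I will confirm this by noting that \(F\) is produced symbolically from the integer coefficients of \(r\). Expanding \(r(y)^2\) at most squares the number of monomials, and the coefficients' bit lengths grow only polynomially. So the output size is polynomial in the input size, and the reduction is a legitimate Karp reduction.
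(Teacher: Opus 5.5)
Your proposal is correct and matches the paper's argument: the paper obtains the corollary directly from the \(\exists\mathbb R\)-hardness of VSC (Theorem~\ref{thm:vsc-hard}) together with the standard inclusion \(\mathrm{NP}\subseteq\exists\mathbb R\), relying implicitly on closure of NP under polynomial-time many-one reductions. Your additional check that the reduction is a genuine bit-model Karp reduction is sound, and so is your implicit observation that only hardness, not membership, is needed; the paper leaves both points unstated.
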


\begin{remark}
  The reduction uses exactly the same algebraic encoding as for the connected‑component result (Section~5) but exploits the geometry of the two sheets in a different way: instead of invoking a whole separating hyperplane, it simply picks two symmetric points and their midpoint.  This witnesses a genuine global clustering structure---a ``valley'' between two high‑density regions---without ever leaving the existential fragment.  Thus VSC is, to the best of our knowledge, the first purely geometric global clustering problem that is proved \(\exists\mathbb R\)-complete, and it delineates the exact boundary where continuous clustering leaves NP.
\end{remark}

\begin{remark}
  The valley condition can be weakened or strengthened.  For instance, requiring a full continuous path connecting \(p\) and \(q\) that stays below \(\theta\) would already stumble into the connected‑component problem (Section~5) and likely climb higher in the real hierarchy.  The simple midpoint condition is therefore the right granularity to keep the problem inside \(\exists\mathbb R\) while remaining meaningful.
\end{remark}

\section{Global clustering with topology: connected components and holes}
\label{sec:topology}

The valley criterion (VSC, Section~4) uses a single existential witness (a midpoint) to guarantee that two high‑density points belong to different ``basins''.  A more classical approach in density‑based clustering is to count the connected components of a super‑level set, or, more generally, to examine its topological invariants \cite{Fukunaga90, CoifmanWasserman05}.  We now analyse the complexity of two prototypical global topological problems: deciding whether a polynomial super‑level set has at least \(k\) connected components, and deciding whether it contains a hole (non‑trivial first homology).

\subsection{Level‑set connected components (CLSC‑\(k\))}

\begin{definition}[Continuous Level‑Set Component Counting – CLSC‑\(k\)]\label{def:clsc}
  \textnormal{\textbf{Instance:}} A polynomial \(f\in\mathbb Z[x_1,\dots,x_d]\), a rational threshold \(\theta>0\), and an integer \(k\ge 2\).
  \textnormal{\textbf{Question:}} Does the super‑level set
  \[
  L_\theta = \{\, x\in\mathbb R^d \mid f(x)\ge\theta \,\}
  \]
  have at least \(k\) connected components?
\end{definition}

\subsubsection{\(\exists\mathbb R\)-hardness}

The hardness proof is a direct adaptation of the two‑sheet encoding used for VSC.  Instead of picking the midpoint, we directly inspect the connected components of the super‑level set.

\begin{theorem}[Hardness of CLSC‑\(k\)]\label{thm:clsc-hard}
  For every fixed \(k\ge 2\), \(\textup{CLSC-}k\) is \(\exists\mathbb R\)-hard.
\end{theorem}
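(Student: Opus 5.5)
The plan is to reuse the two-sheet encoding from the proof of Theorem~\ref{thm:vsc-hard}, replacing the two sheets \(s=\pm1\) by \(k\) parallel sheets. Given an instance \(r\in\mathbb Z[y_1,\dots,y_m]\) of \textsc{Feas}, I would construct the polynomial in \(m+1\) variables
\[
F_k(y,s) \;=\; 1-\Bigl[\, r(y)^2 + \prod_{i=1}^{k}(s-i)^2 \,\Bigr],
\]
and output the CLSC-\(k\) instance \((F_k,\theta=1,k)\). For fixed \(k\), the univariate factor \(\prod_{i=1}^k(s-i)^2\) is a constant-size polynomial with integer coefficients. So \(F_k\) is obtained from \(r\) by squaring and adding a fixed polynomial, which can be done in time polynomial in the bit-size of \(r\). (Even if \(k\) were part of the input in unary, the coefficients are bounded by roughly \((k!)^2\), which has \(O(k\log k)\) bits, so this would still be fine.)

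Next I would identify the super-level set. Both bracketed terms are non-negative, so \(F_k(y,s)\ge 1\) holds iff both terms vanish. Hence
\[
L_1=\{(y,s)\mid r(y)=0,\ s\in\{1,\dots,k\}\}=Z(r)\times\{1,\dots,k\},
\]
where \(Z(r)\) is the real zero set of \(r\).

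\textbf{No-instances.} If \(r\) has no real root, then \(r(y)^2>0\) everywhere, so \(F_k<1\) everywhere. Thus \(L_1=\emptyset\) has zero connected components, and the instance is rejected.

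\textbf{Yes-instances.} If \(r\) has a real root \(y_0\), then each slice \(Z(r)\times\{i\}\) is nonempty, since it contains \((y_0,i)\). I need to argue that \(L_1\) has at least \(k\) components. This is the only step requiring an actual topological argument, though it is short. The coordinate projection \(\pi(y,s)=s\) is continuous on \(L_1\) and takes values in the discrete set \(\{1,\dots,k\}\). It is therefore constant on every connected component of \(L_1\). Each connected component thus lies inside a single slice \(Z(r)\times\{i\}\). Every slice is nonempty, so it contains at least one component, and components in different slices are distinct. This yields at least \(k\) components.

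I note that \(Z(r)\) may itself be disconnected, in which case \(L_1\) has more than \(k\) components. This is harmless, because CLSC-\(k\) asks for \emph{at least} \(k\) components. Consequently \textsc{Feas}\(\le_p\)CLSC-\(k\), and the \(\exists\mathbb R\)-hardness of \textsc{Feas}~\cite{SchaeferStefankovic17} gives the theorem.

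The main point to be careful about is the strict separation in the no-case. We need that no point reaches the threshold, rather than merely that the level set has fewer components. This is exactly why the threshold is placed at the maximum value \(1\), so that the super-level set coincides with the common zero set of the two non-negative summands.
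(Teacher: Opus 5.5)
Your proof is correct. For $k=2$ it coincides with the paper's up to a translation in $s$: the paper uses $(s^2-1)^2=(s-1)^2(s+1)^2$, so $L_1=Z(r)\times\{\pm1\}$. It separates the two sheets by noting that the strip $-1<s<1$ contains no point of $L_1$, which is the intermediate-value fact your projection argument states precisely.

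The routes diverge for $k>2$. Instead of a univariate factor with $k$ roots, the paper takes $k-1$ independent copies of the whole two-sheet gadget, each with its own copy of $r$. Its super-level set is then $L_1\cong Z(r)^{k-1}\times\{\pm1\}^{k-1}$, with at least $2^{k-1}\ge k$ components.

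Your version is more economical: it uses dimension $m+1$ and a single copy of $r$ regardless of $k$, and, as you note, it stays polynomial when $k$ is given in unary. It also handles a disconnected $Z(r)$ explicitly, whereas the paper's phrase ``each block yields two components'' is only correct as a lower bound.

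The product idea, on the other hand, gains sheets exponentially in the number of sign coordinates. With only $\lceil\log_2 k\rceil$ of them (and in fact a single copy of $r$) it would even tolerate $k$ encoded in binary, where your degree-$2k$ factor would be exponentially long. For fixed $k$, as in the theorem, both reductions are equally valid.
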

\begin{proof}
  We reduce from \textsc{Feas}.  Let \(r\in\mathbb Z[y_1,\dots,y_m]\) be an instance of \textsc{Feas}.  Build a polynomial in \(m+1\) variables \((y,s)\) as
  \[
  F(y,s) = 1 - \bigl[\, r(y)^2 + (s^2-1)^2 \,\bigr],\qquad \theta = 1.
  \]
  Then
  \[
  L_1 = \{\, (y,s) \mid r(y)=0,\; s = \pm 1 \,\}.
  \]
  Suppose first that \(r\) has a real zero \(y_0\).  Then the two sheets \(S_+ = \{(y_0,1)\}\) and \(S_- = \{(y_0,-1)\}\) are disjoint and lie in \(L_1\).  More generally, each connected component of the real variety \(V_{\mathbb R}(r)\) produces two separated copies in \(L_1\) (one at \(s=1\), one at \(s=-1\)).  Because the strip \(\{(y,s)\mid -1<s<1\}\) contains no point with \(F\ge 1\), no continuous path within \(L_1\) can connect a point with \(s=1\) to a point with \(s=-1\).  Hence \(L_1\) has at least two connected components (indeed, exactly twice the number of components of \(V_{\mathbb R}(r)\)).  Thus the instance of \(\textup{CLSC-}2\) is positive.

  If \(r\) has no real zero, then \(r(y)^2>0\) everywhere, so \(F(y,s)<1\) for all \((y,s)\) and \(L_1=\varnothing\); the CLSC‑\(2\) instance is negative.
  
  For \(k>2\), we take \(k-1\) disjoint copies of this construction in separate variables \((y^{(i)},s^{(i)})_{i=1}^{k-1}\) and define
  \[
  F_{\!k} = 1 - \sum_{i=1}^{k-1} \bigl[\, r(y^{(i)})^2 + \bigl((s^{(i)})^2-1\bigr)^2 \,\bigr],
  \]
  with threshold \(1\).  When a root exists, each block yields two components, giving \(2^{k-1}\ge k\) components; otherwise the set is empty.  Hence \(\textsc{Feas}\le_p \textup{CLSC-}k\) for any fixed \(k\ge2\), proving \(\exists\mathbb R\)-hardness.
\end{proof}

\subsubsection{Decidability}

Unlike the local or valley‑based criteria, the connected‑component counting problem is not known to be in \(\exists\mathbb R\).  The existence of a separating polynomial, or an existential certificate for disconnectedness, is a subtle issue in real algebraic geometry (see~\cite{BasuPollackRoy06} for the algorithmic status).  However, the problem is certainly decidable via standard semi‑algebraic decomposition.

\begin{proposition}[Decidability of CLSC‑\(k\)]\label{prop:clsc-decid}
  For any fixed \(k\ge2\), \(\textup{CLSC-}k\) is decidable.
\end{proposition}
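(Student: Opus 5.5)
The plan is to reduce the question to the standard algorithmic theory of semi-algebraic sets, namely cylindrical algebraic decomposition and the connectivity (road-map) algorithms of~\cite{BasuPollackRoy06}.

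Given an instance \((f,\theta,k)\), first clear denominators in \(\theta=a/b\) with \(b>0\). This yields the integer polynomial \(g=bf-a\), and
\[
L_\theta=\{x\in\mathbb R^d\mid g(x)\ge 0\}.
\]
So \(L_\theta\) is a closed semi-algebraic set described by one sign condition on one integer polynomial, computable in polynomial time from the input.

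The key step is to invoke an algorithm that, on input a quantifier-free description of a semi-algebraic set \(S\subseteq\mathbb R^d\) with integer coefficients, outputs the exact number of connected components of \(S\). Two routes are available.

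\begin{itemize}
\item \textbf{CAD.} Compute a cylindrical algebraic decomposition of \(\mathbb R^d\) adapted to \(\{g\}\). The output is a finite list of cells, each semi-algebraically homeomorphic to an open cube, on which \(g\) has constant sign. Keep the cells with sign \(\ge 0\); their union is exactly \(L_\theta\). Then compute the adjacency relation between kept cells (a stratified CAD, after a generic linear change of coordinates if needed, makes adjacency decidable). The connected components of \(L_\theta\) are the connected components of this finite adjacency graph, because each cell is connected and \(L_\theta\) is a finite union of them.
\item \textbf{Road map.} Use the singly exponential connected-components algorithm of~\cite{BasuPollackRoy06}, which directly returns semi-algebraic descriptions of the components.
\end{itemize}

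In either case we obtain the number \(N\) of components of \(L_\theta\), which is finite because semi-algebraic sets have finitely many components. We accept iff \(N\ge k\). Since \(k\) is fixed, or even given in binary, this final comparison is trivial.

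I expect the main obstacle to be justifying the connectivity claim rigorously rather than the counting itself. Decomposing \(L_\theta\) into cells is standard, but it must be argued that two kept cells lie in the same component of \(L_\theta\) iff they are joined by a chain of adjacent kept cells. The nontrivial direction is that a path inside \(L_\theta\) between cells crosses only finitely many cells, and consecutive ones in closure-adjacency. This follows from semi-algebraic curve selection: the path may be chosen semi-algebraic, so it meets finitely many cells. It also requires that adjacency, defined as \(C\cap\overline{C'}\neq\varnothing\), be computable from the CAD, which is exactly why one asks for a well-based or stratifying CAD. Rather than reprove this, I will cite the connected-components theorem of~\cite{BasuPollackRoy06} as a black box, noting that its running time is finite (singly or doubly exponential), which suffices for decidability.
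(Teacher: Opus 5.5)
Your proposal is correct and takes the same route as the paper. The paper's proof likewise observes that \(L_\theta\) is a closed semi-algebraic set cut out by one polynomial inequality and then cites the CAD/road-map connected-components algorithm of \cite[Chapter~16]{BasuPollackRoy06}; your extra details (clearing the denominator of \(\theta\), and the cell-adjacency argument, where curve selection is not actually needed because a finite disjoint union of connected cells already has its components determined by the closure-adjacency graph) go beyond what the paper writes but do not change the argument.
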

\begin{proof}
  The super‑level set \(L_\theta\) is a closed semi‑algebraic set defined by the single polynomial inequality \(f(x)-\theta\ge0\).  Cylindrical algebraic decomposition, or the road‑map method, computes the number of (semi‑algebraically) connected components of \(L_\theta\) \cite[Chapter~16]{BasuPollackRoy06}.  Hence the instance can be decided exactly.
\end{proof}

The gap between the \(\exists\mathbb R\)-hardness lower bound and the decidability upper bound pinpoints a central challenge in the complexity analysis of continuous clustering.  We conjecture that CLSC‑\(2\) is not in \(\exists\mathbb R\) and likely sits in \(\exists\forall\mathbb R\) or higher.

\begin{remark}
  The valley problem VSC essentially asks for two points in different components that possess a witness of separation (the midpoint) inside the sub‑threshold region.  CLSC removes this existential witness and asks purely about the topology of \(L_\theta\).  It is precisely the removal of the existential certificate that forces the problem out of \(\exists\mathbb R\) (absent a major advance in Positivstellensatz degree bounds).
\end{remark}

\subsection{Hole detection (HD)}

We now turn to a topological invariant one degree higher: the first homology group.

\begin{definition}[Hole Detection – HD]\label{def:hd}
  \textnormal{\textbf{Instance:}} A polynomial \(f\in\mathbb Z[x_1,\dots,x_d]\) and a rational threshold \(\theta>0\).
  \textnormal{\textbf{Question:}} Is the first singular homology group with integer coefficients of the super‑level set \(L_\theta\) non‑trivial?  
  (Equivalently, does \(L_\theta\) contain a ``hole'', i.e.\ a non‑contractible loop?)
\end{definition}

\subsubsection{\(\exists\mathbb R\)-hardness}

Again we encode the feasibility of a polynomial equation, this time into a product with the unit circle.

\begin{theorem}[Hardness of HD]\label{thm:hd-hard}
  HD is \(\exists\mathbb R\)-hard.
\end{theorem}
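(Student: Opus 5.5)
We reduce from \textsc{Feas}, as in Theorems~\ref{thm:vsc-hard} and~\ref{thm:clsc-hard}, but we replace the two-point ``sheet'' factor \(\{s=\pm1\}\) by the unit circle. Given \(r\in\mathbb Z[y_1,\dots,y_m]\), introduce two fresh variables \(u,v\) and set
\[
F(y,u,v)=1-\bigl[\,r(y)^2+(u^2+v^2-1)^2\,\bigr],\qquad \theta=1 .
\]
The bracket is a sum of squares, so \(F\ge 1\) holds exactly when both squares vanish. Hence
\[
L_1=V_{\mathbb R}(r)\times S^1,\qquad S^1=\{u^2+v^2=1\}.
\]
The construction is clearly polynomial time: one squared copy of \(r\) and a constant-size term.

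\textbf{No root.} If \(r\) has no real root, then \(r(y)^2>0\) everywhere, so \(F<1\) everywhere and \(L_1=\varnothing\). The empty set has \(H_1(\varnothing;\mathbb Z)=0\), so the HD instance is negative.

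\textbf{A root exists.} Suppose \(r(y_0)=0\). I will avoid a Künneth computation, since \(V_{\mathbb R}(r)\) may be unbounded and singular, and instead use a retraction argument. Define:
\begin{itemize}
\item the inclusion \(\iota:S^1\to L_1\), \(\iota(u,v)=(y_0,u,v)\);
\item the projection \(\pi:L_1\to S^1\), \(\pi(y,u,v)=(u,v)\).
\end{itemize}
Both maps are continuous and \(\pi\circ\iota=\mathrm{id}_{S^1}\). By functoriality of singular homology, \(\pi_*\circ\iota_*=\mathrm{id}\) on \(H_1(S^1;\mathbb Z)\cong\mathbb Z\). Therefore \(\iota_*\) is injective, and \(\mathbb Z\) is a direct summand of \(H_1(L_1;\mathbb Z)\). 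In particular \(H_1(L_1;\mathbb Z)\neq 0\), and the loop \(\iota(S^1)\) is a non-contractible loop in \(L_1\), matching the informal ``hole'' reading of Definition~\ref{def:hd}.

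\textbf{Expected obstacle.} The only delicate point is that \(V_{\mathbb R}(r)\) is an arbitrary real variety, possibly with many components, singularities and non-trivial homology of its own. Any argument that computes \(H_1(L_1)\) exactly would have to handle that. The retraction argument sidesteps this entirely, because it only needs one point of \(V_{\mathbb R}(r)\) and the projection onto the circle factor.

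Combining the two cases gives \(\textsc{Feas}\le_p \mathrm{HD}\), which proves \(\exists\mathbb R\)-hardness.
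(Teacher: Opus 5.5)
Your reduction is the same as the paper's: same \(F\), same \(\theta=1\), same empty-set argument. The only difference is how you show \(H_1(V_{\mathbb R}(r)\times S^1;\mathbb Z)\neq 0\) when a root exists, and both versions are correct.

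The paper uses the Künneth formula, \(H_1(V_{\mathbb R}(r)\times S^1)\cong H_1(V_{\mathbb R}(r))\oplus\bigl(H_0(V_{\mathbb R}(r))\otimes H_1(S^1)\bigr)\), and notes that \(H_0\) of a nonempty space is a nonzero free group. You instead use \(\pi\circ\iota=\mathrm{id}_{S^1}\) to get a split injection \(\iota_*\colon H_1(S^1)\cong\mathbb Z\hookrightarrow H_1(L_1;\mathbb Z)\).

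Your stated reason for avoiding Künneth does not hold, though. Künneth for singular homology is valid for arbitrary topological spaces (via Eilenberg--Zilber), with no compactness, smoothness or CW hypotheses. The \(\mathrm{Tor}\) term vanishes here because \(H_*(S^1)\) is free. So unboundedness or singularities of \(V_{\mathbb R}(r)\) cause no trouble for the paper's computation.

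The paper's route gives the exact group: \(H_1(L_1)\) contains a free summand of rank equal to the number of path components of \(V_{\mathbb R}(r)\). Your route is more economical, using only functoriality and a single point \(y_0\). It also produces an explicit loop \(\iota\) with nonzero homology class, hence non-contractible. That matches the parenthetical ``non-contractible loop'' reading of HD directly, rather than via Hurewicz.
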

\begin{proof}
  Reduce from \textsc{Feas}.  Let \(r\in\mathbb Z[y_1,\dots,y_m]\) be given.  Define a polynomial in variables \((y,u,v)\in\mathbb R^{m+2}\) by
  \[
  F(y,u,v) = 1 - \bigl[\, r(y)^2 + (u^2+v^2-1)^2 \,\bigr],
  \qquad \theta = 1.
  \]
  Then
  \[
  L_1 = \{\, (y,u,v) \mid r(y)=0,\; u^2+v^2 = 1 \,\} = V_{\mathbb R}(r) \times S^1,
  \]
  where \(V_{\mathbb R}(r)\) is the real zero set of \(r\) and \(S^1\subset\mathbb R^2\) is the unit circle.

  If \(r\) has a real root, \(V_{\mathbb R}(r)\) is non‑empty.  By the Künneth formula for singular homology,
  \[
  H_1(V_{\mathbb R}(r)\times S^1)\;\cong\; H_1(V_{\mathbb R}(r)) \;\oplus\; \bigl( H_0(V_{\mathbb R}(r)) \otimes H_1(S^1) \bigr).
  \]
  Because \(V_{\mathbb R}(r)\neq\varnothing\), \(H_0(V_{\mathbb R}(r))\) is a non‑zero free abelian group.  Since \(H_1(S^1)\cong\mathbb Z\), the tensor product contributes a non‑zero summand.  Hence \(H_1(L_1)\neq 0\).

  If \(r\) has no real zero, \(L_1=\varnothing\) and all homology groups are trivial.  Thus \(\textsc{Feas}\le_p \textup{HD}\), establishing \(\exists\mathbb R\)-hardness.
\end{proof}

\subsubsection{Decidability}

As with connected components, the first Betti number of a semi‑algebraic set can be computed algorithmically.

\begin{proposition}[Decidability of HD]\label{prop:hd-decid}
  HD is decidable.
\end{proposition}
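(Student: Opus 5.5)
The plan is to reduce HD to a computation of Betti numbers of a semi-algebraic set and then invoke the known algorithms, in parallel with the argument for Proposition~\ref{prop:clsc-decid}. Given an instance \((f,\theta)\), write \(\theta=a/b\) with integers \(a,b>0\). Then \(L_\theta=\{x\mid b f(x)-a\ge 0\}\) is a closed semi-algebraic set defined by one polynomial inequality with integer coefficients, computable in polynomial time from the input. The decision procedure will compute a finite combinatorial model of \(L_\theta\) and read off its first homology.

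\textbf{Key steps.} First, I will reduce to the compact case. A closed semi-algebraic set is semi-algebraically homotopy equivalent to its intersection with a large closed ball \(\overline{B}(0,R)\), for all \(R\) beyond some threshold (conic structure at infinity, \cite{BasuPollackRoy06}). So it suffices to handle \(K=L_\theta\cap\overline{B}(0,R)\), which is defined by \(bf-a\ge0\) and \(R^2-\|x\|^2\ge0\). The threshold for \(R\) can be computed, for instance from the finitely many critical values of \(\|x\|^2\) restricted to the strata of a CAD adapted to \(bf-a\). Alternatively, one can replace \(R\) by an infinitesimal \(1/\varepsilon\) and work over real Puiseux series, as is done in \cite{BasuPollackRoy06}.

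Second, I will compute a CAD of \(\mathbb{R}^d\) adapted to the two defining polynomials. By the semi-algebraic triangulation theorem, which is effective via the CAD, the compact set \(K\) is semi-algebraically homeomorphic to the geometric realization of a finite simplicial complex \(\Delta\) that can be computed explicitly.

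Third, I will compute \(H_1(\Delta;\mathbb{Z})\) from the simplicial chain complex. This is done with Smith normal form of the boundary matrices \(\partial_1\) and \(\partial_2\), and then checking whether \(\ker\partial_1/\operatorname{im}\partial_2\) is non-zero, including any torsion. Since simplicial and singular homology agree on finite complexes, and the homotopy equivalence preserves \(H_1\), the answer to HD is exactly whether this group is non-zero. Alternatively, I could cite directly the algorithms computing all Betti numbers of semi-algebraic sets \cite[Chapter~16]{BasuPollackRoy06}. Over \(\mathbb{Z}\), however, one must also detect torsion, so the triangulation route is safer.

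\textbf{Main obstacle.} I expect the main obstacle to be the first step: justifying that the non-compact \(L_\theta\) can be replaced by a compact piece without changing \(H_1\), and making the radius \(R\), or the infinitesimal, effectively computable. If the hard-coded bound causes trouble, I would instead triangulate the one-point compactification or use the semi-algebraic triangulation of non-compact sets (open simplices). The homology would then be computed from a compact deformation retract of the resulting complex. Every step is an explicit algorithm, so decidability follows, with a doubly exponential running time inherited from the CAD.
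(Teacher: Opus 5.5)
Your proposal is correct and is essentially the paper's argument: triangulate via a CAD and compute \(H_1\) of the resulting finite complex. It is done with more care than the paper's one-liner, since you make explicit the reduction of the possibly unbounded \(L_\theta\) to a compact piece (which the triangulation theorem needs) and use Smith normal form so that a purely torsion \(H_1\) is not missed. For the radius, rely on the infinitesimal \(1/\varepsilon\) rather than on critical values of \(\|x\|^2\) over CAD cells, because CAD cells need not be Whitney regular and the absence of stratified critical points alone does not certify that the topology has stabilised.
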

\begin{proof}
  Using a triangulation obtained from a cylindrical algebraic decomposition, the homology groups of the semi‑algebraic set \(L_\theta\) can be computed~\cite{BasuPollackRoy06}.  In particular, one can decide whether \(H_1\) is non‑trivial.
\end{proof}

\begin{remark}
  The hardness of HD stems entirely from the fact that an arbitrary real variety can be embedded as a factor of a product with \(S^1\).  This construction demonstrates that topological features immediately lift the complexity beyond pointwise existential queries.  We expect that higher Betti numbers are even harder to compute, and a full classification within the real polynomial hierarchy remains a challenging open problem~\cite{SchaeferStefankovic17}.
\end{remark}

\section{Discussion and Conclusion}
\label{sec:conclusion}

\subsection{Overview of the complexity landscape}

This paper has established the first exact complexity‑theoretic classification of continuous clustering problems formulated on polynomial densities.  Four natural primitives were introduced and analysed, yielding a clear hierarchy within the real polynomial classes:
\begin{itemize}
  \item \textbf{Local separation:} \textsc{CMRC} (existence of \(k\) pairwise \(\delta\)-separated points with density \(\ge\theta\)) is proved \(\exists\mathbb R\)-complete (Theorems~\ref{thm:cmrc-hard} and~\ref{prop:cmrc-in-exr}).  The membership in \(\exists\mathbb R\) is immediate because the definition is an existential sentence over the reals; hardness is obtained by reduction from the canonical \(\exists\mathbb R\)-complete problem \textsc{Feas}, using a construction that encodes a semi‑algebraic zero set into two points that differ only in an artificial separation coordinate.
  \item \textbf{Valley‑based global separation:} \textsc{VSC} (two high‑density points whose midpoint lies strictly below the threshold) is also \(\exists\mathbb R\)-complete (Theorems~\ref{thm:vsc-hard} and~\ref{prop:vsc-in-exr}).  Despite its global geometric flavour---it forces a ``valley'' between two dense regions---the problem still requires only three existentially quantified points as a witness.  This is the first example of a purely geometric global clustering problem that stays inside the existential theory of the reals.
  \item \textbf{Topological global separation:} \textsc{CLSC-\(k\)} (super‑level set has at least \(k\) connected components) is \(\exists\mathbb R\)-hard and decidable (Theorem~\ref{thm:clsc-hard}, Proposition~\ref{prop:clsc-decid}).  The hardness follows from the same two‑sheet construction used for VSC, but now one examines the connected components of the super‑level set directly.  Membership in \(\exists\mathbb R\) is not known and would require a polynomial‑size existential certificate for disconnectedness, a longstanding open problem in real algebraic geometry.
  \item \textbf{Homological complexity:} \textsc{HD} (super‑level set has non‑trivial first homology) is \(\exists\mathbb R\)-hard and decidable (Theorem~\ref{thm:hd-hard}, Proposition~\ref{prop:hd-decid}).  The hardness is proved by embedding the zero set of a polynomial as a factor of a product with the unit circle, thereby introducing a hole that cannot be contracted.
\end{itemize}
The standard inclusion \(\mathrm{NP}\subseteq\exists\mathbb R\)~\cite{SchaeferStefankovic17} implies that neither CMRC nor VSC can be in NP unless the two classes collapse (Corollaries~\ref{cor:cmrc-np} and~\ref{cor:vsc-np}).  Hence, even the simplest continuous clustering tasks are already not NP‑complete under the widely accepted assumption that \(\exists\mathbb R\neq\mathrm{NP}\).

\subsection{The boundary between local and global clustering}

The preceding results reveal a sharp conceptual boundary.  \emph{Local} clustering criteria, which ask only for the existence of points with certain properties and their pairwise distances, are captured by \(\exists\mathbb R\): one quantifies over a set of point witnesses, and the remaining constraints are polynomial inequalities.  \emph{Global} criteria, on the other hand, demand that a property hold over \emph{all} points of the density super‑level set---for example, that no continuous path joins two components, or that a loop is not contractible.  Such universal restrictions inherently introduce a \(\forall\) quantifier, which can sometimes be eliminated via algebraic certificates (Positivstellensätze) but whose elimination degree bounds are unknown in general.

The valley problem VSC circumvents this difficulty elegantly: instead of demanding that \emph{all} paths descend below the threshold, it asks for a \emph{single} witness---the midpoint---that already certifies a valley.  The existential quantifier over the midpoint replaces the universal quantifier over paths, keeping the problem inside \(\exists\mathbb R\).  By contrast, CLSC and HD lack such a local witness and must reason about the global topology; consequently they are only known to be decidable via semi‑algebraic decomposition and are currently placed strictly above \(\exists\mathbb R\) in the complexity hierarchy (unless breakthroughs in real algebraic geometry provide polynomial‑size separation certificates).  This separation between VSC and CLSC pinpoints exactly where the quantifier alternation sets in.

\subsection{Consequences for clustering algorithms}

Our complexity lower bounds have immediate implications for the design of exact clustering algorithms on continuous densities.  Any algorithm that decides, for an arbitrary polynomial density, whether there exist separated high‑density points (CMRC) or a valley between two dense regions (VSC) must solve a problem at least as hard as the existential theory of the reals.  This rules out polynomial‑time algorithms with rational certificates unless \(\mathrm{NP}=\exists\mathbb R\).  Algorithms that attempt to count connected components of a level set or detect holes face even higher complexity, potentially requiring doubly exponential time in the worst case when using current semi‑algebraic methods.

In practice, clustering is performed on finite samples drawn from a density.  The present work focuses on the exact, sample‑free version of the problem, which is a necessary first step towards understanding the inherent difficulty of density‑based clustering.  The lower bounds do not directly apply to sample‑based heuristics, but they highlight that any algorithm that claims to exactly recover the ground‑truth continuous clusters from a finite sample must, implicitly or explicitly, solve a problem whose exact version is \(\exists\mathbb R\)-hard.  This may serve as a complexity‑theoretic justification for the use of approximate and iterative methods in the clustering community.

\subsection{Open problems and future directions}

Several natural questions remain open and point to rich connections between clustering, topology, and the real polynomial hierarchy.

\begin{enumerate}
  \item \textbf{Tight upper bounds for topological clustering.}  Is \(\textup{CLSC-}2\) complete for \(\exists\forall\mathbb R\)?  A positive answer would require encoding an \(\exists\forall\) sentence into a component‑counting problem.  Conversely, proving that \(\textup{CLSC-}2\) cannot be in \(\exists\mathbb R\) (under plausible assumptions) would unconditionally separate \(\exists\mathbb R\) from the next level of the hierarchy, a result of independent interest.
  \item \textbf{Modal clustering.}  The problem of detecting a non‑degenerate strict local maximum of a polynomial density is trivially in \(\exists\mathbb R\) (the conditions \(\nabla f=0\) and \(H_f\prec 0\) are polynomial).  Is it already \(\exists\mathbb R\)-hard?  Engineering a polynomial whose local maxima correspond exactly to the zero set of an arbitrary polynomial appears delicate; a hardness reduction would place modal clustering on the same footing as VSC.
  \item \textbf{Higher Betti numbers.}  The hardness of HD already suggests that homological invariants are expensive.  Computing the full Betti numbers \(b_i\) of a super‑level set is likely even harder; a classification inside the real hierarchy (e.g., \(\exists\mathbb R^{\omega}\) or \(\exists_{\omega}\mathbb R\)) could be obtained by a polynomial‑time encoding of the existential theory with several alternations.  Tighter connections to computational topology (e.g., via the Morse inequalities) may provide the necessary tools.
  \item \textbf{Approximation and sampling.}  Can sampling‑based algorithms overcome the exact hardness?  For instance, one might design a randomised polynomial‑time algorithm that, given oracle access to a polynomial density, outputs with high probability the number of connected components within an additive error.  Such a result would bridge the gap between the exact complexity theory presented here and the practical setting of sample‑based clustering, and would clarify the power of randomisation in circumventing algebraic lower bounds.
\end{enumerate}

\subsection{Concluding remarks}

We have demonstrated that exact continuous clustering on polynomial densities resides in the existential theory of the reals and can transcend it as soon as global topological features are required.  The \(\exists\mathbb R\)-completeness of local separated points and of valley‑separated clusters delineates a sharp frontier: these clustering notions are not in NP but stay within the existential fragment.  The jump to connected components or homology pushes the problems into a higher level of the real polynomial hierarchy, emphasizing the intrinsic difficulty of topology‑aware unsupervised learning.  These results open a new bridge between computational real algebraic geometry and foundational clustering theory, and we expect them to stimulate further research into the exact complexity of geometric data analysis tasks.

\bigskip
\noindent\textbf{Acknowledgements.}  The author thanks the anonymous reviewers for their constructive comments.

\bibliographystyle{plainnat}
\bibliography{ref}
\end{document}